\def\num{1} 
\newtheorem{prop}{Proposition}
\newtheorem{defi}{Definition}
\def\BibTeX{{\rm B\kern-.05em{\sc i\kern-.025em b}\kern-.08em
    T\kern-.1667em\lower.7ex\hbox{E}\kern-.125emX}}
\begin{document}

\title{Online smart charging algorithm with asynchronous electric vehicles demand}

\author{\IEEEauthorblockN{Benoît Sohet}
\IEEEauthorblockA{benoit.sohet@edf.fr
\\ \textit{LIA, Avignon Univ.}\\\textit{EDF R\&D, MIRE Dept,}\\ 
\textit{Paris-Saclay, France}
}
\and

\IEEEauthorblockN{Yezekael Hayel}
\IEEEauthorblockA{
\{yezekael.hayel\hfill\\\hfill@univ-avignon.fr\}\\
\textit{LIA, Avignon Univ.}\\ \textit{Avignon, France}
}
\and

\IEEEauthorblockN{Olivier Beaude \hspace{4mm} Jean-Baptiste Bréal}
\IEEEauthorblockA{\{olivier.beaude, jean-baptiste.breal\}@edf.fr\\
\textit{EDF R\&D, OSIRIS and MIRE Dept,}\\ 
\textit{Paris-Saclay, France}
} 
\and


\and
\IEEEauthorblockN{Alban Jeandin}
\IEEEauthorblockA{alban.jeandin@izivia.com\\
\textit{Izivia, EDF group}\\ 
\textit{Courbevoie, France}
}
}

\maketitle

\begin{abstract}
The increasing penetration of Electric Vehicles (EVs) and renewable energies into the grid necessitates tools to smooth the demand curve.
To this end, this paper suggests an EV charging scheduling algorithm and a smart charging price. 
As EVs arrive at the charging station and leave at different times, the operator of the station applies at each EV arrival an online scheduling algorithm based on the concept of ``water filling''.
The EV charging price is guaranteed at their arrival and defined as a function of the online algorithm's output, following the idea of locational marginal pricing.
A numerical comparison with the offline version of the algorithm -- in which the operator knows in advance all future arrival and departure times -- shows the efficiency of the suggested online scheduling algorithm.

\end{abstract}

\begin{IEEEkeywords}
Electric vehicles; Demand response
\end{IEEEkeywords}
\section{Introduction}

Even though Electric Vehicles (EVs) represent an encouraging answer to local pollution (air quality, noise), they also bring challenges to the grid. In France for example,~\cite{RTE} expects a power demand increase of 2.2 to 3.6 GW by 2035.
Currently, the prononced local penetrations of EVs\footnote{More than 80,000 EVs in circulation in Paris region: \url{https://www.statistiques.developpement-durable.gouv.fr/sites/default/files/2020-04/immatriculations\_neuves\_2019.zip}.} may already lead to infrastructure investment costs.
On the electricity generation side, the dramatic increase of intermittent and distributed renewable energies expected in the near future~\cite{iea20} will make the whole electricity generation system less flexible in order to maintain the supply-demand balance.
For all these reasons, the flexibility of the electricity demand system and in particular of EV charging needs to be fully exploited, from smart charging to demand response.

Smart charging consists in postponing the EV charging profile in time compared to the plug and charge method, and has been largely studied~\cite{Wang2016, nimalsiri2019survey}.
Demand response mechanisms act on EV users' decisions about when, where and how much to charge by using price incentives, and also benefits from a large literature~\cite{vardakas2014survey, jordehi2019optimisation}.
In this work, the goal is to provide a smart charging algorithm and a price incentive tractable enough to be integrated in a complex electrical-transportation coupled system, which takes into account both the interactions between EV users (while driving and charging), and the different system operators~\cite{sohet21}.

In this system, the operator of an EV Charging Station (EVCS) is responsible for the EV charging scheduling and uses a centralized smart charging algorithm.
A natural example of such an algorithm is the Water Filling algorithm~\cite{shinwari2012water}, which reduces the variance in time of the total load at the EVCS in an efficient manner.
In the present paper, EVs can arrive at the EVCS and leave at different times, and the operator does not necessarily have this information in advance.
There exist papers such as~\cite{he2012optimal} which deal with this asynchronous charging need using an online charging scheduling, where the operator solves an optimization problem at each time slot with the freshly available EV information, but no simple explicit solution is given.

The Charging Unit Price (CUP) considered in this work is the Locational Marginal Pricing~\cite{li2013distribution} (LMP), where EVs pay the charging quantity multiplied by the marginal operator's cost associated with an additional marginal charging quantity.
Such a pricing scheme is known to be the best one to incite EVs to reduce the operator's cost~\cite{ALIZADEH17} (when EVs are faced with a choice, e.g., the charging quantity or the presence time at the EVCS).
However, the LMP requires a continuous real-time communication between EVs and the operator.
This assumption can be relaxed using for example the day-ahead real-time pricing~\cite{doostizadeh2012day}, where the CUP is fixed the day before.
For the moment, charging services such as the charging quantity and price should be guaranteed by the operator to the EVs when they arrive at the EVCS.

The main contributions of the present paper are:

\begin{itemize}
    \item The procedure the operator of the EVCS needs to follow in order to solve the charging scheduling problem with asynchronous EV arrival and departure times.
    This procedure includes at each arrival of EVs at the EVCS an explicit online algorithm based on WF and which does not require any optimization computations.
    \item An incentive CUP corresponding to the marginal operator's costs, costs which are the results of the online charging scheduling algorithms.
    This CUP is communicated to EVs at their arrival.
    \item Comparison of the online charging schedule and CUP with those obtained in the optimal case of an omniscient operator (or offline charging problem), using real data of EVs arrival and departure times and PhotoVoltaic (PV) electricity generation.
\end{itemize}

The paper is organized as follows. 
The framework and notations are introduced in Sec.~II. The offline and charging scheduling problem are described resp. in Sec.~III and~IV.
The CUP is defined in Sec.~V and Sec.~VI corresponds to the numerical studies.
Finally, conclusions and perspectives are given in last section.


\noindent
\textbf{Notations:} vectors can be written $\{v_1, .., v_N\}$ or in bold $\bm{v}$.
\section{Charging scheduling framework}

The operator of an EVCS wants to determine the charging profile of EVs plugged at its EVCS during time period $\mathcal{T}$.
This period $\mathcal{T} = \{1\,,\dots\,,T\}$ is assumed to be discretized in $T$ time slots of equal duration $\delta$.
The EVs are divided into different classes $(a,d)$, depending on their arrival $a$ and departure $d$ time slots at the EVCS.
More precisely, EVs of class $(a,d)$ arrive at the EVCS at the beginning of time slot $a\in\mathcal{T}$ and leave at the end of time slot $d\in\mathcal{T}$, and therefore may only charge during time slots $\{a,\dots,d\}$.
For example if $d=a$, the corresponding EV class only charges during time slot $a$.
The set of times' pairs $(a,d)$ is written $\mathcal{R}\subseteq \mathcal{T}\times\mathcal{T}$.
The total charging need aggregated over all EVs of class $(a,d)$ is written $L^{(a,d)}$.

For each class $(a,d)$, the operator wants to determine the charging power $\ell^{(a,d)}_{t}$ at each time slot $t\in\{a,\dots,d\}$ aggregated over all EVs of class $(a,d)$, so that the corresponding aggregated charging need $L^{(a,d)}$ is fulfilled at departure time slot $d$, i.e. $\sum_{t=a}^{d} \ell^{(a,d)}_{t} = L^{(a,d)}/\delta$.
For all $t$, $\ell^{(a,d)}_{t}\geq 0$ but note that vehicle to grid could constitute a direct extension of this work by relaxing this constraint.
This work does not focus on how a charging power aggregated over an EV class is distributed between the EVs of this class, which is assumed feasible. 

The per-class aggregated charging profile selected by the operator (vector written $\tilde{\bm{\ell}}$) is the one minimizing some charging cost function (whose minimum value is written $\tilde{G}$), which also depends on electrical usages $\bm{\ell}^0$ at the EVCS other than EV charging and called nonflexible (typically, the electrical consumption of a tertiary site if considering an EVCS at a work place).
This nonflexible part can also include a local generation, e.g. when a PV panel is associated to the EVCS.
Note that the nonflexible term can be then negative, meaning that there is more local electricity generation at the EVCS than consumption.
Even if this term can include both generation and consumption, it will be simply called ``consumption" in the following to get a generic terminology.
The charging cost function depends on the cost of the total power load $\ell^0_t + \sum_{(a,d)}\ell^{(a,d)}_t$ at a given time slot $t$, represented by an increasing and convex function $f$.
As commonly used in the literature, this function can represent EVCS on-site economic mechanisms\footnote{See French network tariff: \url{https://www.enedis.fr/sites/default/files/TURPE_5bis_plaquette_tarifaire_aout_2020.pdf} (in French).} (charging bill, including an incentive associated to self-consumption) or local network congestion effects~\cite{mohsenian10} (losses, voltage regulation, equipment aging).

The next two sections introduce two different charging scheduling problems, depending on the information available to the operator.
Note however that in both problems, the operator is assumed to know in advance (before the first charging time slot) the nonflexible consumption $\bm{\ell}^0$.

 \section{Offline optimization problem}
 \label{sec:offline}
 
In this section, the operator is assumed to know in advance all arrival $a$ and departure $d$ time slots and the corresponding charging needs $L^{(a,d)}$ before the beginning of the whole time period $\mathcal{T}$.
In practice, all EVs could declare this information through an app before the first charging time slot, or the operator could base the values $L^{(a,d)}$ on statistical data.
Therefore, the operator can compute the optimal charging profiles \textit{offline}, i.e. before the beginning of $\mathcal{T}$, by solving the following charging scheduling problem~\ref{eq:offline}:

\begin{equation}
\begin{aligned}
&\min_{\hspace{7mm}(\ell^{(a,d)}_t)^{(a,d)\in\mathcal{R}}_{a\leq t\leq d}} \hspace{2mm}\sum_{t=1}^T f\left(\ell^0_{t}+\sum_{(a,d)\in\mathcal{R}} \ell_{t}^{(a,d)}\right)\,,\\
&\text{s.t.}~\forall (a,d)\in\mathcal{R}\,,
\begin{cases}
\sum_{t=a}^d \ell_{t}^{(a,d)} = L^{(a,d)}/\delta\,,\\
\ell_{t}^{(a,d)}\geq 0\,,~\forall t\in\{a,\dots,d\}\,.\hspace{-5mm}
\end{cases}
\end{aligned}
\label{eq:offline}
\tag{$\mathcal{P}$}
\end{equation}

It is difficult to find an explicit charging scheduling algorithm solution of~\ref{eq:offline}.
However,~\ref{eq:offline} is a quadratic optimization problem (QP) and is easily solved by built-in Python function \textit{minimize} (in SciPy package), relying on a sequential least squares programming method.
As the objective function to minimize in~\ref{eq:offline} is strictly convex (because $f$ is), there is a unique minimal value $\tilde{G}$, but several possible optimal charging profiles $\tilde{\bm{\ell}}=(\ell^{(a,d)}_t)^{(a,d)\in\mathcal{R}}_{a\leq t\leq d}$ may exist.

In practice, such a scheduling may suffer from forecast errors made on arrival and departure time slots ``seen from" time slot 0 (i.e. the time when the problem has to be solved).
However, this (unrealistic) offline problem where all EV classes' demands are supposed to be known in advance can provide an upper bound for the performance of a more realistic method presented below, in order to measure its efficiency.

\section{Online two-step procedure}
\label{sec:online}


In this section, a more realistic assumption on the operators' access to information is studied.
Here, the operator does not know all the arrival and departure times in advance: the operator knows the arrival/departure time slots of an EV and its charging need only when the EV arrives at the EVCS (and communicates this information to the operator).
Therefore, for the whole time period $\mathcal{T}$, the operator waits for the next EV arrival to update charging scheduling decisions.
\subsection{Description}
At each EV arrival time slot $a\in\{1,\dots,T\}$ at the EVCS, the operator does the following procedure:
\begin{enumerate}
    \item Update the quantities $L^d_a$ left to charge from this arrival time slot $a$ to each possible departure time $d\in\{a,\dots,T\}$.
    For each $d\geq a$, $L^d_a$ is made of the charging need $L^{(a,d)}$ aggregated over EVs which arrived at $a$ and leave at $d$, plus the charging need left to charge of EVs which arrived earlier (and leave also at $d$).
    This charging need corresponds to the quantity $L^d_{a^-}$ which was left to charge from the previous EV arrival time slot $a^-$ (up to departure time $d$), minus the amount $\delta\times\sum_{t = a^-}^{a-1}\tilde{\ell}^d_{a^-,t}$ (defined in step~\ref{step:2}) that has already been charged since $a^-$:
    \begin{equation}
        \forall d \in \{a,\ldots,T\}, \quad
L^d_a= L^d_{a^-} - \hspace{1mm}\delta\hspace{-1mm}\sum_{t = a^-}^{a-1}\hspace{-1mm}\tilde{\ell}^d_{a^-,t} + L^{(a,d)}\,.
    \end{equation}
    Note that if no EV arrived before time slot $a$, the quantity $L^d_a$ is simply equal to $L^{(a,d)}$.
    The (increasingly) ordered set of departure times $d\in\{a,\dots,T\}$ where $L^d_a>0$ is denoted $\mathcal{D}_a$.
    Seen from instant a, it corresponds to all the (future) departure times for which a nonzero charging need has to be satisfied.

    \item \label{step:2}
     Use Algo.~\ref{algo:arrival} to compute the optimal value $\tilde{G}^a$ and per-class aggregated charging profile $\left(\tilde{\ell}^d_{a,t}\right)$ ($d\in\mathcal{D}_a,~a\leq t\leq d$),
    solutions of the following problem~\ref{eq:online}.
    This problem corresponds to the online charging scheduling of the per-class remaining energy needs $\bm{L}_a=\{L^d_a,~\forall d\in\mathcal{D}_a\}$ left to charge at arrival time $a$:
    \begin{equation}
    \begin{aligned}
            \min_{\hspace{7mm}(\ell^d_{a,t})^{d\in\mathcal{D}_a}_{a\leq t\leq d}} \hspace{2mm} \sum_{t=a}^T f\bigg( \ell^0_{t} + \sum_{d\in\mathcal{D}_a} \ell^d_{a,t} \bigg)\,,\\~\\~
                \end{aligned}
        \label{eq:online}
        \tag{$\mathcal{P}\left(\bm{L}_a\right)$}
    \end{equation}
    \vspace{-12mm}
            \begin{equation*}
            \hspace{3mm}
        \text{s.t. }\forall d\in\mathcal{D}_a,
        \begin{cases}
                \sum_{t=a}^d \ell^d_{a,t} = L^d_a/\delta\,,\\
     \ell^d_{a,t} \geq 0,~\forall t\in\{a,\dots,d\}\,.
             \end{cases}
    \end{equation*}
    Note that $\ell^d_{a,t}$ is the charging power \textit{programmed} for time slot $t$ and aggregated over all EVs which arrived at the EVCS at $a$ or before and leave at $d$: $\ell^d_{a,t}=\sum_{b = 1}^a \ell_t^{(b,d)}$.
    The charging power of these EVs at time $t$ may be updated later.
    \end{enumerate}

\begin{algorithm}[t]
\caption{(run at time $a$) Solution of~\ref{eq:online}}
\label{algo:arrival}
\nl\textbf{Available information:} departure times $\mathcal{D}_a$ and charging needs $L^d_a,~\forall d\in\mathcal{D}_a$\\
    \nl\For{each departure time $d\in\mathcal{D}_a$}
    {
   \nl Optimal charging profile of EVs leaving at $d$ using Water Filling solution $\bm{\ell}^{\text{WF}}$ (see Def.~\ref{def:standard}):
\begin{equation*}
(\tilde{\ell}^d_{a,t})_{a\leq t\leq d}
=\bm{\ell}^{\text{WF}}\left(L^{d}_a\,,\left(\ell_{t}^0+\sum_{u\in\mathcal{D}_a}^{u < d}\tilde{\ell}_{a,t}^u\right)_{a\leq t \leq d}\right)
\end{equation*}
    }
    \nl Minimal operator cost $\tilde{G}^a = \sum_{t=1}^{T} f\left(\ell^0_{t}+\sum_{d\in\mathcal{D}_a}\tilde{\ell}_{a,t}^d\right)$\\
\KwOut{Charging profiles $(\tilde{\ell}^d_{a,t})^{d\in\mathcal{D}_a}_{a\leq t\leq d}$ and cost $\tilde{G}^a$}
\end{algorithm}

\subsection{Analysis}
By following this two-step procedure for the whole time period $\mathcal{T}$, the operator minimizes at each arrival time slot $a$ the corresponding charging costs (objective function of~\ref{eq:online}). 
Note that the procedure only gives, at each time slot $t$ and for each departure time slots $d\geq t$, the optimal charging power aggregated over \textit{all} EVs leaving at $d$ and which arrived at $t$ or before.
This power is given by the last update $\tilde{\ell}^d_{\underline{t}, t}$ done at $\underline{t}$, the last EV arrival time slot before (or at) $t$.
Then, there are infinite ways to dispatch this power among the different EV classes $(a,d)$ with $a\leq \underline{t}$, so that $\sum_{a = 1}^{\underline{t}} \tilde{\ell}_t^{(a,d)}=\tilde{\ell}^d_{\underline{t},t}$.
The fact that at each arrival time slot $a$, Algo.~\ref{algo:arrival} provides an optimal solution to problem~\ref{eq:online} relies on the following definition.
\begin{defi}
If $f$ is increasing and convex and $\bm{\ell}^0$ is increasingly sorted, the vector solution of the following charging scheduling problem during time period $\mathcal{U}=\{t_i,\dots,t_f\}$:
\begin{equation}
\min_{\left(\ell_t\right)_{t\in\mathcal{U}}} \sum_{t \in\mathcal{U}} f\left(\ell^0_t+\ell_t\right)\,,\quad \text{s.t.}
\begin{cases}
\sum_{t\in\mathcal{U}} \ell_t = L/\delta\,,\\
\ell_t\geq 0\,,~\forall t\in\mathcal{U}\,,
\end{cases}
\tag{$\mathcal{S}$}
    \label{eq:standard}
\end{equation}
is:
\begin{equation}
\bm{\ell}^{\text{WF}}\left(L\,,\left(\ell_t^0\right)_{t\in\mathcal{U}}\right) = 
    \left\{\frac{L+L^0_{t_0}}{t_0(L)} - \ell^0_t\,, \quad \forall t\in\mathcal{U}\right\}\,,
    \label{eq:WF}
\end{equation}
where $L^0_t = \sum_{s\leq t}\ell^0_{t_i+s}$ and $t_0(L)\geq 1$ is such that $L\in ]\Delta_{t_0};\Delta_{t_0+1}]$, with $\Delta_t = t\times\ell^0_{t_i+t}-L^0_t$ for $t\leq t_f-t_i$ and $\Delta_{t_f-t_i} = +\infty$.
\label{def:standard}
\end{defi}
\noindent
This is a standard solution called Water-Filling (WF)~\cite{shinwari2012water, sohet20}.
In this solution, the total load (charging plus nonflexible consumption) at each time slot utilized for the charging operation is the same (it has the same "water level"\footnote{As would the water do by filling time slots with less nonflexible consumption.}), while the total load is higher in non-used time slots.
Equation~\eqref{eq:WF} shows that any increasing and convex function $f$ leads to the same optimal charging profile $\bm{\ell}^{\text{WF}}$, which smoothes as much as possible the total power load (made of the charging and nonflexible terms).
Also note that unlike the offline problem for which an optimization solver is needed, the solution has here an explicit form; it is obtained ``immediately".

The core idea of Algo.~\ref{algo:arrival} is to first solve (solution written $\tilde{\bm{\ell}}^{d_1}_a = (\tilde{\ell}^{d_1}_{a,t})_{a\leq t\leq d_1}$) the standard charging scheduling problem(~\ref{eq:standard}) introduced in Def.~\ref{def:standard} for EVs leaving the EVCS at the first departure time slot $d_1\in\mathcal{D}_a$, in function of the per-class aggregated charging need $L^{d_1}_a$ and the nonflexible vector $\bm{\ell}^0$.
Then, to solve this standard optimization problem(~\ref{eq:standard}) for EVs leaving the EVCS at the second departure time slot $d_2$, in function of $L^{d_2}_a$ and a fictitious nonflexible vector $\bm{\ell}^0 + \tilde{\bm{\ell}}^{d_1}_a$ which includes the charging profile of EVs which will have left earlier, and so on\dots

The following Prop.~\ref{prop:optimal} proves that Algo.~\ref{algo:arrival} gives an optimal solution of~\ref{eq:online} at each EV arrival time slot $a$.
Algorithm~\ref{algo:arrival} can actually be extended to problems with EV classes with presence time slots at the EVCS embedded in one another.
Note that the solution suggested in Algo.~\ref{algo:arrival} is not the only optimal charging profile $\tilde{\bm{\ell}}$ to give the unique minimal value $\tilde{G}^a$ of the corresponding charging cost function (e.g., the algorithmic solution with some charging power ``exchanged" between two EV classes and two time slots).

\begin{prop}
The output $\tilde{\bm{\ell}}^d_a \left(\forall d\in\mathcal{D}_a\right)$ of Algo.~\ref{algo:arrival} is a solution of optimization problem~\ref{eq:online}.
\label{prop:optimal}
\end{prop}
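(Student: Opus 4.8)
\emph{Proof strategy.} The plan is to exploit that~\ref{eq:online} is a convex program---$f$ is increasing and (strictly) convex and all constraints are affine---so that the Karush--Kuhn--Tucker (KKT) conditions are \emph{sufficient} for global optimality, with no constraint qualification needed. It therefore suffices to exhibit multipliers under which the output of Algo.~\ref{algo:arrival} is a KKT point; feasibility is immediate, since each water-filling step~\eqref{eq:WF} meets the budget $L^d_a/\delta$ exactly with nonnegative powers supported on $\{a,\dots,d\}$. Writing $x_t=\ell^0_t+\sum_{d\in\mathcal{D}_a}\tilde\ell^{d}_{a,t}$ for the final total load and attaching a multiplier $\lambda_d$ to each budget equality and $\mu^d_t\ge 0$ to each sign constraint, stationarity reads $f'(x_t)=\lambda_d+\mu^d_t$ for $t\in\{a,\dots,d\}$. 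Together with complementary slackness this is equivalent to a water-level statement: for every class $d$, all slots actually used by $d$ share a common total load $\nu_d:=(f')^{-1}(\lambda_d)$ (using strict convexity to invert $f'$), while every unused slot $t\le d$ satisfies $x_t\ge\nu_d$.

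Granting this, I would reduce the proposition to two structural properties of the profile produced by Algo.~\ref{algo:arrival}, which processes the departure times $d_1<d_2<\dots$ of $\mathcal{D}_a$ in increasing order: (P1) for each $d_k$, all slots $t$ with $\tilde\ell^{d_k}_{a,t}>0$ carry the same \emph{final} total load $\nu_{d_k}$; and (P2) for each $d_k$, every slot $t\in\{a,\dots,d_k\}$ with $\tilde\ell^{d_k}_{a,t}=0$ has $x_t\ge\nu_{d_k}$. Given (P1)--(P2) one simply sets $\lambda_d=f'(\nu_d)$ and $\mu^d_t=f'(x_t)-\lambda_d$, which vanishes on the used slots by (P1) and is nonnegative on the unused ones by (P2); this recovers a valid KKT point and hence optimality.

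Both properties follow from the nestedness of the charging windows---all equal to $\{a,\dots,d\}$ and thus embedded in one another---together with the leveling nature of~\eqref{eq:WF}: step $k$ replaces the running baseline $b_t$ (the load accumulated from classes $d_1,\dots,d_{k-1}$) by $\max(b_t,\nu^{\mathrm{step}}_k)$ on $\{a,\dots,d_k\}$, a map that is monotone non-decreasing in $b_t$. I would argue by induction on the processing order. Just after step $k$, every slot used by $d_k$ sits at the common intermediate level $\nu^{\mathrm{step}}_k$; each later step $k'>k$ sees these slots (which lie in the larger window $\{a,\dots,d_{k'}\}$) with \emph{equal} load, hence either raises them all to $\nu^{\mathrm{step}}_{k'}$ or leaves them all untouched, so their equality survives to the end, giving (P1). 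For (P2), at the end of step $k$ an unused slot $t\le d_k$ carries $x_t\ge\nu^{\mathrm{step}}_k$, i.e. at least as much as any slot $s$ used by $d_k$; since every subsequent water-filling is the monotone map $b\mapsto\max(b,\cdot)$ and both $t,s$ remain in all later windows, this ordering is preserved, yielding $x_t\ge x_s=\nu_{d_k}$.

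The main obstacle is exactly this interaction: because later (longer-deadline) classes may pour additional water onto slots already used by earlier classes, the intermediate levels $\nu^{\mathrm{step}}_k$ need \emph{not} be monotone in $k$ and are generally not the final levels $\nu_{d_k}$. The crux is to show that any such extra water is deposited \emph{uniformly} over each earlier class's used slots, so that intra-class flatness (P1) is not destroyed by the later steps; the monotonicity of the water-filling operator acting on the nested windows is what guarantees this, and it is the step I would write out in full detail.
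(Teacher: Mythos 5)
Your proposal is correct and follows essentially the same route as the paper's proof: sufficiency of the KKT conditions for the convex program, combined with an induction over the increasingly sorted departure times that exploits the leveling/monotonicity structure of the nested water-filling steps. If anything, your properties (P1)--(P2) make explicit the uniform-raising argument that the paper's inductive step only sketches through its case analysis on whether $\tilde{\ell}^{d_{n+1}}_{a,t}$ vanishes, so the structural lemma you flag as the crux is exactly the right thing to write out.
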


\begin{proof}
This can be shown by recurrence according to the departure time slots, using Def.~\ref{def:standard} and the Karush-Kuhn-Tucker conditions given that $f$ is convex and differentiable.
The complete proof is
\if\num1
given in Appendix.
\else
available online\footnote{\url{...}}.
\fi
\end{proof}

\subsection{Example}
\begin{figure}
    \centering
    \includegraphics[width = 0.5\textwidth]{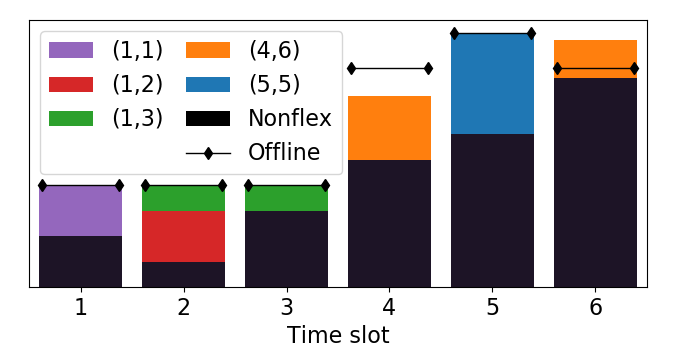}
    \caption{Example of the optimal charging profiles of five EV classes $(a,d)$ (colored bars), computed with the online scheduling charging problem of Sec.~\ref{sec:online} in function of a nonflexible consumption profile (black bars).
    \textit{The resulting total power load is less smooth than the one obtained by solving the offline scheduling problem~\ref{eq:offline} (diamond line), due to unexpected EV arrivals in the online procedure.}}
    \label{fig:example}
\end{figure}

The global online charging scheduling procedure is illustrated in Fig.~\ref{fig:example} with an example on a time period of $T=6$ time slots (e.g., the working hours from 8 am to 8 pm with $\delta = 2$ h) and five EV classes.
The operator starts by scheduling the charging profile of the EV classes which arrive for the first charging time slot: $(1,1)$, $(1,2)$, $(1,3)$.
Following Algo.~\ref{algo:arrival}, the operator starts with EV class $(1,1)$, which has no choice but to charge only during the first time slot.
Then, the operator charges EV class $(1,2)$ only during the second time slot because of the high total power load in the first slot, due to EV class $(1,1)$.
Finally, the charging need of EV class $(1,3)$ is adequately split between the first three time slots in order to smooth the total power load over the first three time slots. When other EVs arrive at the fourth time slot, the charging needs of EV classes which arrived before have already been fulfilled.
The operator plans to charge EV class $(4,6)$ during the fourth and fifth time slots.
Unfortunately, at time slot $t=5$, the operator must charge EV class $(5,5)$ which just arrived and has to postpone the charge of EV class $(4,6)$ to the sixth and last time slot.
Note that if the operator knew in advance that EVs would arrive at the fifth time slot, it could have charged more charging need of EV class $(4,6)$ during the fourth time slot, as in the offline charging problem (diamond line).

\section{Charging unit price (CUP)}
In addition to optimizing online the EV charging profiles, the operator of the EVCS sets a smart CUP (in price unit per energy unit) in order to indicate to EVs the actual (operator's) cost of their charging operation.
In this work, we suggest a smart CUP $\lambda^{(a,d)}$ for each EV class $(a,d)$ based on the charging costs the operator minimized by adequately choosing the per-class aggregated charging profiles of all EV classes. 
More precisely, we define $\lambda^{(a,d)}$ as the marginal operator costs corresponding to the charging need $L^{(a,d)}$ of EV class $(a,d)$, based on the Locational Marginal Pricing scheme~\cite{li2013distribution}.
Note that by definition, different EV classes may have different CUPs.
An interesting property of such a pricing scheme is that the EV class staying at the EVCS the whole time period $\mathcal{T}$ leads to a smaller marginal charging cost than an EV class staying only one time slot, because the former charging profile is more flexible than the latter (i.e. can be scheduled on a larger temporal period). Therefore, such a pricing mechanism can be used as an incentive for EV users to become more flexible for their charging operations.

In the offline charging scheduling problem introduced in Sec.~\ref{sec:offline}, the operator's charging cost $\tilde{G}$ is the one obtained by solving problem~\ref{eq:offline}, which gives the following CUPs:
\begin{equation}
    \lambda^{(a,d)}  = \frac{\partial \tilde{G}}{\partial L^{(a,d)}}\left(\bm{\ell}^0\,, \bm{L} \right)\,.
\end{equation}
By definition of the offline charging scheduling problem, the operator knows in advance all arrival and departure time slots and the corresponding per-class aggregated charging needs. Therefore the operator can compute the minimal charging cost $\tilde{G}$ by solving problem~\ref{eq:offline} and directly transmits the CUPs to EVs offline, before the whole time period $\mathcal{T}$.

In the online charging scheduling problem introduced in Sec.~\ref{sec:online}, the operator's charging cost considered to establish the CUP of EV class $(a,d)$ is $\tilde{G}^a$, the one computed at the arrival time slot $a$ of these EVs:
\begin{equation}
\lambda^{(a,d)} = \frac{\partial \tilde{G}^a}{\partial L^{(a,d)}}\left(\bm{\ell}^0\,, \textstyle\left(L_{a^-}^d\right)^{d\in\mathcal{D}_{a^-}}_{a^-\leq a}
\right)\,.
\label{eq:cup_online}
\end{equation}
Note that this cost $\tilde{G}^a$ may be different from the one when these EVs leave the EVCS, or from the one at the end of the whole charging operation at $t=T$:
if additional EVs arrive between $a$ and $d$, the operator updates the charging profiles and its costs with the online Algo.~\ref{algo:arrival}.
As mentioned in the introduction, the chosen pricing mechanism defined in Eq.~\ref{eq:cup_online} has the advantage of providing a price to EVs at their arrival, thus answering one of the main current EV users' expectations.

\section{Numerical results}
\subsection{Commuting framework with real data}

A natural use case corresponding to the charging scheduling problems introduced in this work is commuting.
Workers leave their EVs plugged in at an EVCS during working hours.
The EVCS is assumed to own PhotoVoltaic (PV) solar panels and use its PV generation to charge EVs and re-inject the remainder into the grid.
This PV generation is the only nonflexible term at the EVCS, and thus the vector $\bm{\ell}^0$ is nonpositive.
The data\footnote{Available at \url{https://www.renewables.ninja/}.} used for the PV generation comes from~\cite{PFENNINGER16} and represents the hourly generation of a 560 kilowatt peak during a random\footnote{The choice of the day does not affect the nature of the numerical results.} day (January 15, 2014) in Paris (see green curve in Fig.~\ref{fig:PV}).
The operator wants to minimize its charging costs by scheduling the EV charging during this day.

If at a time slot, there is more PV generation than EV charging, the operator is remunerated when re-injecting what is left of the PV generation into the grid.
However, as too much electricity re-injected may be potentially harmful
for the local distribution grid, this remuneration decreases with the quantity re-injected\footnote{See taxes on network companies: \url{https://bofip.impots.gouv.fr/bofip/797-PGP.html/identifiant=BOI-TFP-IFER-30-20210210} (in French).}.
If electricity from the grid is needed when EV total charging load exceeds PV generation, the operator's charging costs are often modeled in the literature by a quadratic proxy~\cite{mohsenian10}.
All this justifies the use of an increasing and quadratic function $f$ of the total load (either negative or positive) at each time slot in order to guide the operator's charging schedule.

\begin{figure}
    \centering
    \includegraphics[width = 0.5\textwidth]{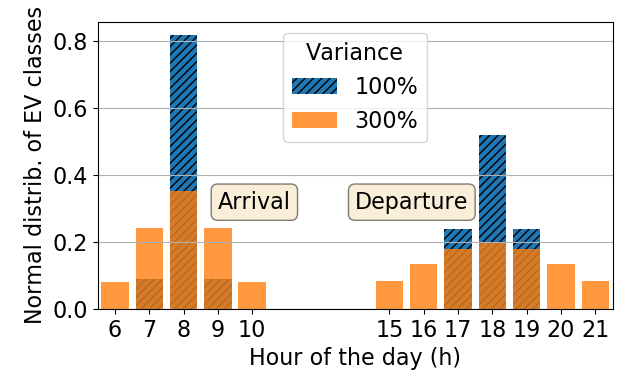}
    \caption{Arrival and departure discretized distributions, both with ENTD data and in the case where both variances were multiplied by three.
    \textit{The arrival distribution is more peaky than the departure one.}}
    \label{fig:distrib}
\end{figure}
The distribution of EVs in the different $(a,d)$ classes is given by the data from the French mobility survey ENTD\footnote{Enquête Nationale Transports et Déplacements: \url{https://www.statistiques.developpement-durable.gouv.fr/sites/default/files/2018-11/La\_mobilite\_des\_Francais\_ENTD\_2008\_revue\_cle7b7471.pdf} (in Fr.).} 2008.
The arrival and departure time slots are both modeled by independent normal distributions, respectively with means 8 am and 6 pm and variances 22 and 45 minutes (the arrival distribution is more peaky).
These distributions are discretized into time slots of one hour (following the PV generation data discretization) and shown in Fig.~\ref{fig:distrib}, with ENTD data and in the case where both variances were multiplied by three.
The latter scenario with higher variance could be realized with the remote working of nowadays.
We consider $N=100$ EVs, and the number $N^{(a,d)}$ of EVs in class $(a,d)$ is the product of $N$ with the distribution values of $a$ and $d+1$ (according to the convention that EV class $(a,d)$ can charge between the $a$-th and $d$-th time slots included, and leave at the beginning of time slot $d+1$).
EVs are assumed to have the same charging need, equivalent to their daily driving consumption: 6 kWh, due to the 30 km daily driving distance according to ENTD survey, at a 0.2 kWh/km average consumption per distance unit.
The charging need aggregated over class $(a,d)$ is therefore $L^{(a,d)}=6\times N^{(a,d)}$ kWh.

\begin{figure}
    \centering
    \includegraphics[width = 0.5\textwidth]{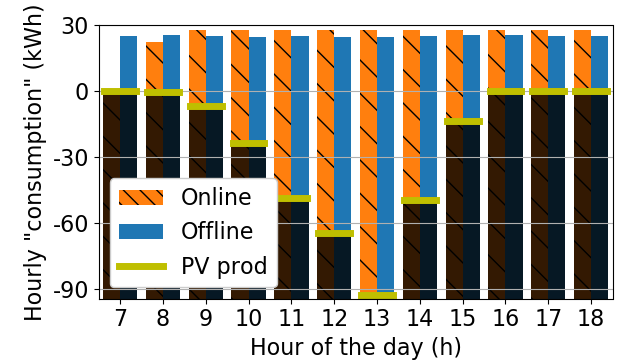}
    \caption{Comparison of optimal per-class aggregated charging profiles obtained with online and offline charging problems.
    \textit{In the online charging problem, the operator waits for a higher PV generation before charging the EVs which arrived at 7 am, while in the offline charging problem, the operator starts charging them right away because it knows at lot of EVs arrive at 8 am.}}
    \label{fig:PV}
\end{figure}
\begin{figure}
    \centering
    \includegraphics[width = 0.5\textwidth]{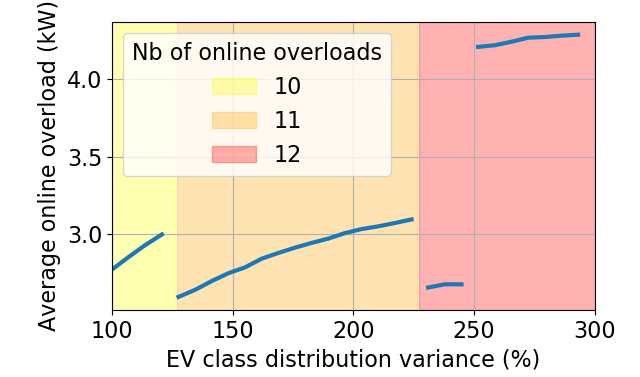}
    \caption{Average power overload of the online charging profile over the offline one and number of overload time slots in a day, in function of the EVs distribution variance.
    \textit{As the number of EVs arriving early increases with the variance, the overload increases too (see Fig.~\ref{fig:PV}), but may be divided into more time slots (when a new departure time is considered).}}
    \label{fig:depassement}
\end{figure}

\subsection{Comparison of online and offline charging profiles}
Figure~\ref{fig:PV} shows the optimal per-class aggregated charging profiles obtained with the online and offline charging problems and corresponding to the charging needs associated to ENTD data.
We can see that considering the online charging problem, the operator does not charge the few EVs which arrived at the EVCS at 7 am right away, but wait for time slots with higher PV generation.
In the offline charging problem, the operator knows that a lot of the PV generation will be used to charge the large number of EVs arriving at 8 am and therefore starts to charge the EVs arriving at 7 am as soon as possible.

Figure~\ref{fig:depassement} studies the power overload of the online charging profile with respect to the offline one.
More precisely, Fig.~\ref{fig:depassement} shows the number of time slots when the online charging power is greater than the offline one, and the average overload value during these time slots (blue line).
From 125~\%~of EVs distribution variance, some EVs start to leave at 8 pm from the EVCS (see Fig.~\ref{fig:distrib}) and thus the online overload (see Fig.~\ref{fig:PV}) can be divided into 11 time slots instead of 10, which mechanically reduces the average overload.
The same goes from 225~\%~of the variance, where some EVs start to leave at 9 pm.
However from 250~\%, some EVs start to arrive at 6 am at the EVCS which allows the offline charging scheduling to start one hour earlier while the online one still waits for the PV generation peak (see Fig~\ref{fig:PV}), hence the average overload increase.
Except from these discontinuities, the average overload increase with the variance for the same reasons: the higher the variance, the higher the nuber of EVs arriving early.

Finally, note that the explicit computations of the online charging profile are approximately a thousand times faster than a QP solver used for example for the offline optimization problem.


\subsection{Comparison of online and offline CUPs}
The last Fig.~\ref{fig:cups} compares the CUPs obtained with the online and offline methods.
To better illustrate the differences, the prices are plotted for different variances of EV arrival and departure distributions.
More precisely, we suppose that the variances of both the arrival and departure distributions can go up to 300~\%~of the ENTD values (see Fig.~\ref{fig:distrib}).
The vertical axis of Fig.~\ref{fig:cups} is normalized so that the highest point is equal to one.

First, Fig.~\ref{fig:cups} shows that from 250~\%~of variance values, the discretized EV classes distribution starts to consider EVs arriving at the EVCS at 6 am or 10 am (see corresponding online CUPs), which also explains the discontinuities in the CUPs.
Note that by definition, the online CUP $\lambda^{(a,d)}$ reflects the marginal cost of the operator computed at the arrival $a$ of the EV class $(a,d)$, and not the effective marginal cost (calculable at the departure $d$ of the EV class).
Therefore, in the online charging problem, EVs arriving at some time $a_1$ are likely to pay a CUP cheaper than EVs arriving at $a_2>a_1$, because the prices of the former only take into account the charging of EVs arriving at $a_1$, while the latter take both EV classes into account (see Fig.~\ref{fig:cups}).

This fairness aspect seems not addressed in the literature and could constitute a future work.
Similarly, most part of the increasing and decreasing features of the CUPs are also caused by this issue: for example, for a higher variance of arrival distribution, there is a lower proportion of EVs arriving before or at 8 am, which explains why the CUP associated to $a=8$ decreases.
Aside from that, Fig.~\ref{fig:cups} shows that the online CUPs do not depend on the departure time and the offline CUP is the same for all EV classes.
The reason is that, in this use case of commuting and PV generation, any small change in the charging need of an EV class can be compensated by the charging profiles of the other EV classes in order to keep a smooth total load (this is not true in the example given in Fig.~\ref{fig:example}).

\begin{figure}
    \centering
    \includegraphics[width = 0.5\textwidth]{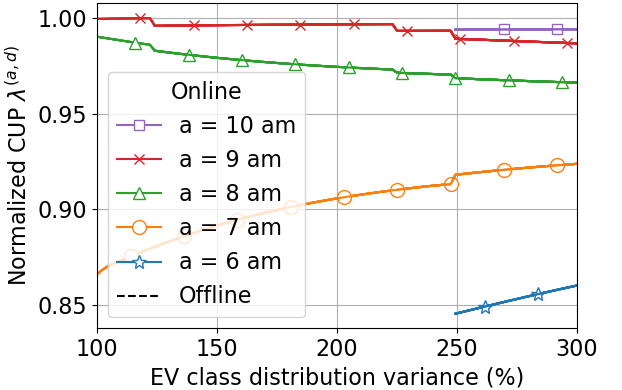}
    \caption{CUPs (vertical axis normalized) in function of the variance of EV classes' distribution, for each EV class $(a,d)$ and both online and offline scheduling problems.
    \textit{The offline CUP does not depend on the EV class due to the smoothed total load (see Fig.~\ref{fig:PV}).
    For the same reason, the online CUP does not depend on the EV departure time.
    However, because this CUP is given at the EV arrival, it is cheaper for EVs arriving earlier because the charging load of future EVs is not taken into account.}
    \if\num0
    \vspace{-4mm}
    \fi
}
    \label{fig:cups}
\end{figure}

\section{Conclusions and Perspectives}
This paper introduces an online charging scheduling algorithm adapted for asynchronous EVs arrival and departure.
At each EV arrival time slot, the operator of the EVCS updates the remaining quantities to charge until each potential departure time slot. Algorithm.\ref{algo:arrival} is used to find the corresponding optimal per-class aggregated charging profiles.
This online procedure requires minimal information and yields operator's charging costs only 1~\%~higher than the optimal value (obtained by an omniscient operator performing an offline scheduling optimization).
The CUPs defined as the optimized marginal operator's costs are guaranteed at EVs arrivals, but still suffer from fairness issues (first EVs to arrive likely to pay less).

The design of a CUP more correlated with the total time spent by an EV at the EVCS is currently under investigation.
The following working topic will be to integrate this CUP into the complete system taking into account the driving and charging decisions of EVs as well as the interactions between different system operators, and to show that this CUP can be an optimal incentive mechanism.

\if\num1
\appendices
\setcounter{secnumdepth}{0}
\section*{Appendix: Proof of Prop~\ref{prop:optimal}: Algo~\ref{algo:arrival} optimal}
\begin{proof}
Let $(\tilde{\ell}^d_{a,t})^{d\in\mathcal{D}_a}_{a\leq t\leq d}$ be the output of Algo~\ref{algo:arrival}.
The sorted departure times set can be written $\mathcal{D}_a = \{d_1,\dots,d_H\}$ with $H$ the set's cardinal.
Let $\mathcal{D}_a^n = \{d_1,\dots,d_n\}$ and $\bm{L}_{a,n}=\{L^d_a, ~\forall d\in\mathcal{D}_a^n\}$.
We are going to show that $P(n) = ``(\tilde{\ell}^d_{a,t})^{d\in\mathcal{D}_a^n}_{a\leq t\leq d}$ is solution of~$\mathcal{P}\left(\bm{L}_{a,n}\right)$" for all $n\in\{1,\dots,H\}$ by recurrence, which will prove Prop.~\ref{prop:optimal} because problems~\ref{eq:online} and $\mathcal{P}\left(\bm{L}_{a,H}\right)$ are equivalent.

\textit{Initialization:}
By definition, problems $\mathcal{P}\left(\bm{L}_{a,1}\right)$ and~\ref{eq:standard} are equivalent, therefore $(\tilde{\ell}^{d_1}_{a,t})_{a\leq t\leq d}=\bm{\ell}^{\text{WF}}\left(L_a^{d_1}\,,\left(\ell_{t}^0\right)_{a\leq t \leq d_1}\right)$ is solution of $\mathcal{P}\left(\bm{L}_{a,H}\right)$.

\textit{Recurrence:}
For any $n\in\{1,\dots,H-1\}$ we show $P(n+1)$, assuming $P(n)$.
Problem~$\mathcal{P}\left(\bm{L}_{a,n+1}\right)$ is convex and differentiable because function $f$ is, so that it is equivalent to its Karush-Kuhn-Tucker (KKT) conditions:
\begin{equation}
\forall d\hspace{-1mm}\in\hspace{-1mm}\mathcal{D}_a^{n+1},\forall t,~
\ell^d_{a,t} \times\bigg(\overbrace{f'\Big(\ell^0_{t} + \ell^{d_{n+1}}_{a,t}+ \hspace{-2mm}\sum_{d\in\mathcal{D}_a^{n}} \hspace{-2mm}\ell^d_{a,t}\Big) - \mu^d}^{\geq 0}\bigg)
=0,
\label{eq:KKT}
\end{equation}
with $\mu^d$ the (charging need) equality constraints Lagrange multipliers.
We show that $(\tilde{\ell}^d_{a,t})^{d\in\mathcal{D}_a^{n+1}}_{a\leq t\leq d}$ is solution of~\eqref{eq:KKT}.

By definition of $(\tilde{\ell}^{d_{n+1}}_{a,t})_{a\leq t\leq d}$ and the KKT conditions of~\ref{eq:standard}, \eqref{eq:KKT} is verified for $d=d_{n+1}$ and all $t\in\{a,\dots,d\}$, with $\mu^{d_{n+1}}=\mu_{n+1}$ the Lagrange multiplier of~\ref{eq:standard}.

Let $\mu_1,\dots,\mu_n$ be the Lagrange multipliers of~$\mathcal{P}\left(\bm{L}_{a,n}\right)$.
For $d_k\in\mathcal{D}_a^n$, there are two cases.
For $t$ such that $\ell_{a,t}^{d_{n+1}}=0$, \eqref{eq:KKT} is verified with $\mu^{d_k} = \mu_k$.
Otherwise, \eqref{eq:KKT} is verified with $\mu^{d_k} = \mu_{n+1}$.
\end{proof}
\fi

\bibliographystyle{ieeetr} 
\bibliography{myrefs}

\end{document}